\documentclass[10pt, onecolumn, letterpaper]{IEEEtran}
\IEEEoverridecommandlockouts
\usepackage{gensymb}
\usepackage{cite}
\usepackage{graphicx}
\usepackage{url}
\usepackage{amsmath}
\usepackage{latexsym,amssymb,epsfig,color,verbatim}
\usepackage{algorithm}
\usepackage[noend]{algpseudocode}
\usepackage{pifont}
\usepackage{slashbox}
\usepackage[tight,footnotesize,center]{subfigure}
\usepackage{color}
\usepackage{cancel}
\usepackage{multirow}
\usepackage{amsthm}

\newcommand{\BEQA}{\begin{eqnarray}}
\newcommand{\EEQA}{\end{eqnarray}}

\newtheorem{theorem}{{\bf Theorem}}

\newtheorem{definition}{{\bf Definition}}

\usepackage{mathtools,amssymb,lipsum}

\usepackage{cuted}
\setlength\stripsep{3pt plus 1pt minus 1pt}
\usepackage{tabu}
\usepackage{textcomp}

\newtheorem{remark}{Remark}

\begin{document}
\bstctlcite{IEEEexample:BSTcontrol}



\title{A Game-Theoretic Framework for Resource Sharing in Clouds }

\author{Faheem Zafari$^{\ast}$, Kin K. Leung$^\ast$, Don Towsley$^\dagger$, Prithwish Basu$^\diamond$, and Ananthram Swami$^\ddagger$\\ ${}^{\ast}$Imperial College London,   ${}^{\dagger}$University of Massachusetts Amherst, ${}^{\diamond}$BBN Technologies,\\ ${}^\ddagger$U.S. Army Research Laboratory\\${}^{\ast}$\{faheem16, kin.leung\}@imperial.ac.uk,  ${}^{\dagger}$towsley@cs.umass.edu, ${}^{\diamond}$prithwish.basu@raytheon.com,\\ ${}^\ddagger$ananthram.swami.civ@mail.mil \thanks{This work was supported by the U.S. Army Research Laboratory and the U.K. Ministry of Defence under Agreement Number W911NF-16-3-0001.  The views and conclusions contained in this document are those of the authors and should not be interpreted as representing the official policies, either expressed or implied, of the U.S. Army Research Laboratory, the U.S. Government, the U.K. Ministry of Defence or the U.K. Government. The U.S. and U.K. Governments are authorized to reproduce and distribute reprints for Government purposes notwithstanding any copy-right notation hereon. Faheem Zafari also acknowledges the financial support by EPSRC Centre for Doctoral Training in High Performance Embedded and Distributed Systems  (HiPEDS, Grant Reference EP/L016796/1), and Department of Electrical and Electronics Engineering, Imperial College London.}}
\maketitle

\begin{abstract}	
Providing resources to different users or applications is fundamental to cloud computing. This is a challenging problem as a cloud service provider may have insufficient resources to satisfy all  user requests. Furthermore, allocating available resources optimally to different applications is also challenging. Resource sharing among different cloud service providers can improve resource availability and resource utilization as certain cloud service providers may have free resources available that can be ``rented'' by other service providers.  However, different cloud service providers can have different  objectives or \emph{utilities}. Therefore, there is a need for a framework that can share and allocate resources in an efficient and effective way, while taking into account the objectives of various service providers that results in a \emph{multi-objective optimization} problem. In this paper, we present a \emph{Cooperative Game Theory} (CGT) based framework for resource sharing and allocation among different service providers with varying objectives  that form a coalition. We show that the resource sharing problem can be modeled as an $N-$player \emph{canonical} cooperative game with \emph{non-transferable utility} (NTU) and prove that the game is convex  for monotonic non-decreasing utilities.  
We propose an  $\mathcal{O}({N})$ algorithm that provides an allocation from the \emph{core}, hence guaranteeing \emph{Pareto optimality}.  We evaluate the performance of our proposed resource sharing framework in a number of simulation settings and show that our proposed framework improves user satisfaction and  utility of  service providers.

\end{abstract}

\section{Introduction}\label{sec:intro}

With the wide-scale proliferation of cloud computing, a large number of individuals and businesses have transitioned towards  the use of cloud computing platforms \cite{shi2017online}. Users (applications)\footnote{Throughout the paper, we use the term ``users'' and ``applications'' interchangeably.} request resources from a cloud service provider for a particular task and  service provider needs to provide  requested resources based on a \emph{Service Level Agreement} (SLA).  The goal for any service provider is to satisfy as many applications as possible by a timely provision  of resources, resulting in an increase in user satisfaction as well as utility of  service providers.  However,  the provider often lacks sufficient available resources and thus allocation of resources to different users is not  straightforward. 
 \par 
Optimal allocation of resources for cloud computing  is a fundamental research problem \cite{xiao2013dynamic,lin2010dynamic,beloglazov2012energy}. Due to the recent advent of  paradigms such as Big Data \cite{john2014big}, Internet of Things (IoT) \cite{atzori2010internet},  and Machine to Machine (M2M) communication, the problem is further exacerbated as the demand for  limited resources has increased further. Even when resources are available, allocating them optimally to different applications requires efficient algorithms. Hence, it is important to ascertain the availability of resources and allocate them optimally to different applications. 

\par To ensure the availability of resources, numerous solutions have been proposed in the literature ranging from buying more resources to meet  peak demands, to creating shared resource pools among geographically co-located service providers. However,  different service providers may have different objectives and  existing resource pool based mechanisms do not consider utilities of different service providers. Ideally,  a resource sharing mechanism should consider objectives of different service providers, which  results in a multi-objective optimization (MOO) problem. Furthermore, while resource sharing or creation of a resource pool may tackle the resource availability problem, it does not address  resource allocation problem, i.e.,  optimally allocating the available resources to different users. Therefore, there is a need for a framework that not only ascertains resource availability by enabling resource sharing among different service providers, but also optimally allocates resources to users, while accounting for the service providers' objectives. 

\par  We present a  cooperative game theory based framework that can be used for resource sharing and allocation in cloud computing. We allow resource sharing among different cloud service providers, while considering the objectives (utilities) of cloud service providers. Rather than considering a specific resource, 
we consider a wide range of resources that are present across different cloud service providers.  
We show that different resource providers can benefit by creating a coalition among  service providers and then allocating resources to users. We also show that the coalition of service providers is stable, where no service  provider will incur any loss (when compared with working alone). 
Also, we obtain a Pareto optimal (details in Section~\ref{sec:prelim}) solution to our multi-objective resource sharing problem.

The main contributions of this paper are:
\begin{enumerate}
	\item We propose a cooperative game theory (CGT) based multi-objective resource sharing and allocation framework for cloud computing that guarantees \emph{Pareto} optimality and consider different  objectives of the service providers that share their resources.
	\item We show that the resource sharing and allocation problem can be modeled as a \emph{canonical} game with non-transferable utility. Furthermore,  when service provider objectives are represented by monotone non-decreasing utilities, we show that our canonical game is convex. Due to the convexity of game, the \emph{core} is non-empty and the grand coalition formed by all service providers  is stable.
\item We address the problem of obtaining an allocation from the core by proposing a computationally efficient  $\mathcal{O}(N)$ algorithm  that is guaranteed to provide an allocation from the core. 
	\item We evaluate the performance of our proposed framework through extensive experiments and show that  resource sharing can improve  utility of all service providers and also increase user satisfaction.  
\end{enumerate}
The paper is further structured as: 
  Section~\ref{sec:prelim} provides a primer on multi-objective optimization, cooperative game theory, canonical games, and the core. Section~\ref{sec:sysmodel} presents the system model used in this paper. It also presents the resource sharing and allocation problem along with theoretical properties of the cooperative game used to obtain the Pareto optimal solution. Section~\ref{sec:exp_results} presents our experimental results. Section \ref{sec:related} provides a review of different solutions proposed in literature for  resource allocation,  while Section~\ref{sec:conclusion} concludes the paper.

\section{Preliminaries}\label{sec:prelim}
In this section, we present a discussion on multi-objective optimization, cooperative game theory, and the \emph{core}. 

\subsection{Multi-Objective Optimization}
 For $m$ inequality constraints and $p$ equality constraints, MOO identifies a vector $\boldsymbol{x}^*=[x_1^*,x_2^*,\cdots ,x_t^*]^T$ that optimizes a vector function
	\begin{equation}\label{eq:vectorobjective}
	\centering 
	\bar{f}(\boldsymbol{x})=[f_1(\boldsymbol{x}),f_2(\boldsymbol{x}),\cdots,f_N(\boldsymbol{x})]^T
	\end{equation}
	such that 
	\begin{align}\label{eq:constraints}
	\centering 
	& g_i(\boldsymbol{x})\geq 0, \; i=1,2,\cdots,m, \\
	&h_i(\boldsymbol{x})=0\;\; i=1,2,\cdots,p. \nonumber
	\end{align}
	where $\boldsymbol{x}=[x_1,x_2,\cdots ,x_t]^T$ is a vector of $t$ decision variables and the feasible set is denoted by $F$.
The fundamental difference between a single objective optimization (SOO) and MOO is that MOO involves a vector of objective functions rather than a single objective function. Therefore, in MOO, the optimal solution is not a single point but a \emph{frontier} of solutions known as \emph{Pareto frontier} or \emph{Pareto boundary} (see \cite{cho2017survey} for details). Some basic definitions related to MOO are: 


\begin{definition}{Pareto Optimality:}
	For any maximization problem,  $\boldsymbol{x}^*$ is $Pareto\; optimal$ if the following holds for every $\boldsymbol{x} \in F$,
	\begin{align}\label{eq:paretoptimality}
	\bar{f}(\boldsymbol{x}^*)\geq\bar{f}(\boldsymbol{x}). 
	\end{align}
	where $\bar{f}(\boldsymbol{x})=[f_1(\boldsymbol{x}),f_2(\boldsymbol{x}),\cdots,f_N(\boldsymbol{x})]^T$ and  
	$\bar{f}(\boldsymbol{x}^*)=[f_1(\boldsymbol{x}^*),f_2(\boldsymbol{x}^*),\cdots,f_N(\boldsymbol{x}^*)]^T$.
\end{definition}

\begin{figure}
\centering
\includegraphics[width=0.4\textwidth]{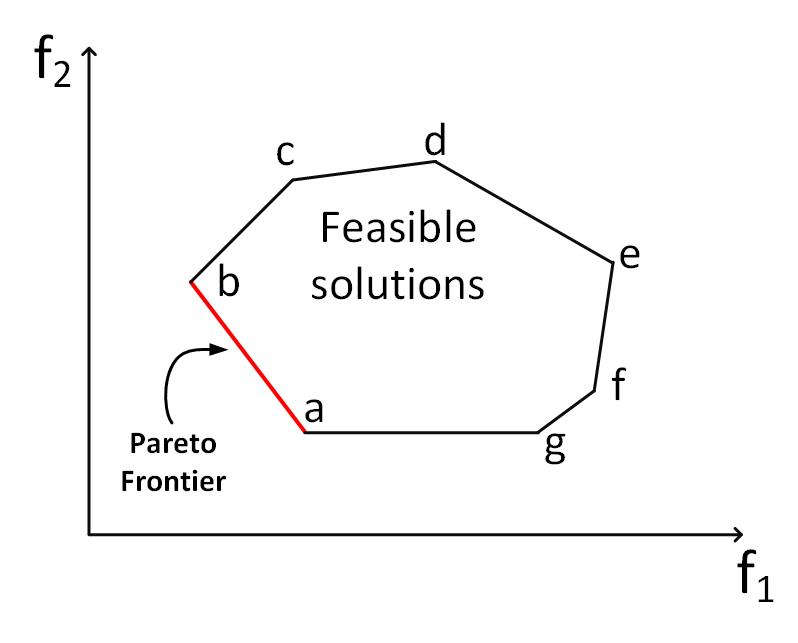}
\caption{MOO with two objective functions}
\protect\label{fig:pareto}
\end{figure}

Figure~\ref{fig:pareto} shows a MOO problem (minimization problem) with two objective functions $f_1$ and $f_2$.  The boundary $\overleftrightarrow{ab}$ is the Pareto frontier  as it consists of all the Pareto optimal solutions. 

\subsection{Cooperative Game Theory}
Cooperative game theory provides a set of analytical tools that assists in understanding the behavior of rational players in a cooperative setting \cite{han2012game}. Players can have agreements among themselves that affect the strategies as well as obtained utilities of game players. Coalition games are one of the basic types of cooperative games that deal with the formation of  coalitions, namely groups of two or more cooperating players. Formally,

\begin{definition}{Coalition Games \cite{han2012game}:}
Any coalition game with \emph{non-transferable utility} (discussed below)  can be represented by the pair $(\mathcal{N},\mathcal{V})$ where $\mathcal{N}$ is the set of players that play the game, while $\mathcal{V}$ is a set of payoff vectors such that \cite{roger1991game}:


\end{definition}
\begin{enumerate}
	\item $\mathcal{V}(S)$ is a closed and convex subset of $\mathbb{R}^S$.
	\item $\mathcal{V}(S)$ is comprehensive, i.e., if we are given payoffs $\mathbf{x} \in \mathcal{V}(S)$ and $\mathbf{y} \in \mathbb{R}^S$ where $\mathbf{y}\leq \mathbf{x}$, then $y \in \mathcal{V}(S)$. In other words, if the members of coalition $S$ can achieve a payoff allocation $\mathbf{x}$, then the players can change their strategies to achieve an allocation $\mathbf{y}$.  
	\item The set $\{\mathbf{x}| \mathbf{x} \in \mathcal{V}(S) \;$ and $\; x_n \geq z_n, \forall n \in S \}$, with $z_n = \max\{y_n|\mathbf{y} \in \mathcal{V}(\{n\}) \}\le \infty \; \forall n \in \mathcal{N}$ is a bounded subset of $\mathbb{R}^S$. In other words, the set of vectors in $\mathcal{V}(S)$ for a coalition $S$ where the coalition members receive a payoff at least as good as working alone (non-cooperatively) is a bounded set. 
\end{enumerate} 

\begin{definition}{Value of a coalition:}
The sum of all players' payoff from a particular pay-off vector for any coalition is known as the value of a coalition. 
\end{definition}
It is worth mentioning that $\mathcal{V}$ is the set of payoff vectors, while ${v}$ is the sum of payoffs that all players get in a particular payoff vector (coalition).
\begin{definition}{Non-Transferable Utility (NTU)\cite{han2012game}:}
	If the total utility of any coalition cannot be  assigned a single real number  or if there is a rigid restriction on  utility distribution among players, then the game has a {non-transferable} utility. 
\end{definition}

\begin{definition}{Characteristic function\cite{han2012game}:}
	The characteristic function for any coalition game with NTU is a function that assigns  a set of payoff vectors, $\mathcal{V}(S) \subseteq \mathbb{R}^S$, where each element of the payoff vector $x_n$ represents a payoff that player $n \in S$ obtains, depending on the selected strategy, within the coalition $S$.
\end{definition}

\begin{definition}{Characteristic form Coalition Games \cite{han2012game}:}
	A coalition game is said to be of {characteristic} form, if the value of coalition $S \subseteq \mathcal{N}$ depends only on the members of coalition. 
\end{definition}

\begin{definition}{Superadditivity of NTU games\cite{roger1991game}:}
A canonical game with NTU is said to be superadditive if the following property is satisfied. 
\begin{align}\label{eq:superadditivity}
v(S_1 \cup S_2) \supset  \{x \in \mathbb{R}^{S_1\cup S_2}|(x_n)_{n \in S_1} \in v(S_1),(x_m)_{m \in S_2}  
\in v(S_2)   \} \;  \forall S_1 \subset \mathcal{N}, S_2 \subset \mathcal{N}, S_1 \cap S_2 =\emptyset. 
\end{align}
\end{definition}
 i.e., if any two disjoint coalitions $S_1$ and $S_2$ form a large coalition $S_1 \cup S_2$, then the coalition $S_1 \cup S_2$ can always give its members the payoff that they would have received in the disjoint coalition  $S_1$ and $S_2$.  
 
\begin{definition}{Canonical Game:}
A coalition game is canonical if it is superadditive and in characteristic form.
\end{definition}
The \emph{core} is a  widely used solution concept for canonical games as discussed below. 

\subsection{Core}
We first define some terms related to the core \cite{han2012game,roger1991game}.

\begin{definition}{Group Rational:}
	A payoff vector $\textbf{x}\in \mathbb{R}^\mathcal{N}$ 
	is group-rational if $\sum_{n \in \mathcal{N}}x_n=v(\mathcal{N})$.
\end{definition}

\begin{definition}{Individually Rational:}
	A payoff vector $\textbf{x}\in \mathbb{R}^\mathcal{N}$ is individually-rational if every player can obtain a payoff no less than acting alone, i.e., $x_n \geq v(\{n\}), \forall n\in \mathcal{N}$.
\end{definition}

\begin{definition}{Imputation:}
	A payoff vector that is both individually and group rational is  an imputation. 
\end{definition}

\begin{definition}{Grand Coalition:}
The coalition formed by all game players in $\mathcal{N}$ is the grand coalition.
\end{definition}

Based on the above definitions, we can now define the core of an NTU canonical coalition game.

\begin{definition}{Core\cite{han2012game}:}
	For any NTU canonical game $(\mathcal{N},\mathcal{V})$, the core is the set of imputations in which no coalition $S\subset\mathcal{N}$ has any incentive to reject the proposed payoff allocation and deviate from the grand coalition to form a coalition $S$ instead. This can be mathematically expressed as 
	\begin{align}\label{eq:coreTU}
	\mathcal{C}_{NTU}&= \{ \mathbf{x} \in \mathcal{V}(\mathcal{N}) | \forall S, \nexists \mathbf{y} \in \mathcal{V}(S), \; such\; that \; y_n > x_n,  	\forall n \in S \}.
	\end{align}
\end{definition}
\begin{remark}\label{rem:paretocore}
	Any payoff allocation from the core is Pareto-optimal as evident from definition of the core. Furthermore, the grand coalition formed is stable, i.e., no two players will have an incentive to leave the grand coalition to form a smaller coalition. 
\end{remark}
However, the core is not always guaranteed to exist. Even if the core exists, it may be very large as it a convex set. Therefore, finding a suitable allocation from the core is challenging.

\section{System Model}\label{sec:sysmodel}

 \begin{figure}
	\centering
	\includegraphics[width=0.48\textwidth]{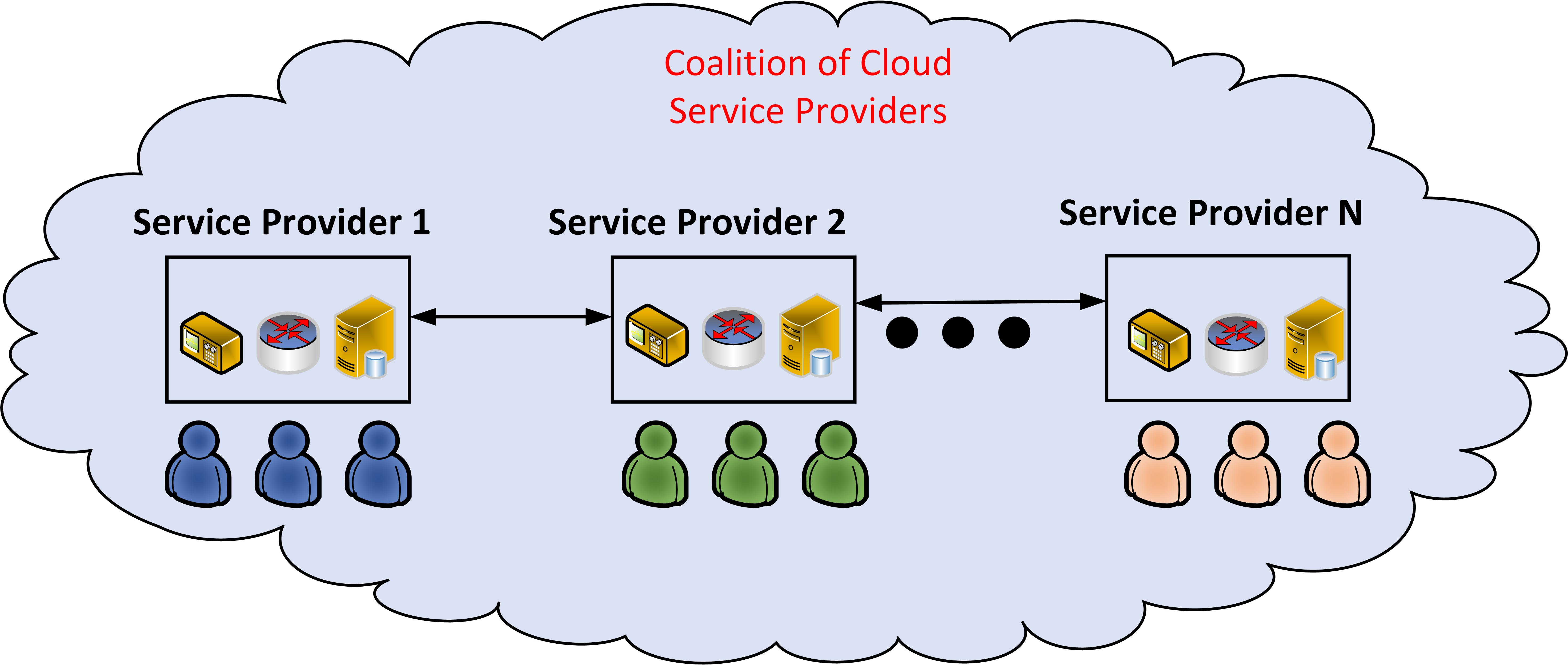}
	\caption{Our system model with multiple cloud service providers}
	\protect\label{fig:system_model}
	\vspace{-0.2in}
\end{figure}

Let $\mathcal{N}=\{1,2, \cdots, N\}$ be the set of all the cloud service providers that act as players in our game. 
We assume that each player has a set of $\mathcal{K}=\{1,2,\cdots, K\}$ different types of resources such as communication, computation and storage resources.  The $n^{th}$ service provider represents its  available resources as $C^n = \{ C_1^{(n)},\cdots, C_K^{(n)}\}$.  $C_k^{(n)}$ is the amount of resources of type $k$ available at service provider $n$. 
The vector $C=\{\sum_{n \in \mathcal{N}}C_{1}^{(n)}, \sum_{n \in \mathcal{N}}C_{2}^{(n)}, \cdots ,\sum_{n \in \mathcal{N}}C_{K}^{(n)}\}$ represents  all available resources at different service providers.  Each service provider $n$ has a set of  native applications $\mathcal{M}_n= \{1,2,\cdots, M_n\}$ that ask for resources. 
The set of all applications that ask for resources from the set of service providers (coalition of service providers) is given by $\mathcal{M}=\mathcal{M}_1\cup\mathcal{M}_2\cdots\cup \mathcal{M}_N , \;$ where $\mathcal{M}_i \cap \mathcal{M}_j=\emptyset, \; \forall i \neq j,$ i.e., each application originally asks only  one service provider for resources. 
Every player  $n \in \mathcal{N}$ has a request (requirement) matrix $R^{(n)}$.
\begin{equation}
\label{eq:R_Req}
R^{(n)}=\Biggl[\begin{smallmatrix}
\mathbf{r^{(n)}_1}\\ 
.\\ 
.\\ 
.\\
\mathbf{r^{(n)}_{{M}_n}}
\end{smallmatrix}\Biggr] = \Biggl[\begin{smallmatrix}
r^{(n)}_{11} & \cdots  &\cdots   & r^{(n)}_{1{K}} \\ 
. & . &.  &. \\ 
. & . &.  &. \\
. & . &.  &. \\
r^{(n)}_{M_{n}1}&\cdots  &\cdots   & r^{(n)}_{{M}_n{K}}
\end{smallmatrix}\Biggr]
\end{equation}
where the $i^{th}$ row corresponds to the $i^{th}$ application while  columns represent different resources, i.e., $r_{ij}^{(n)}$ is the amount of $j^{th}$ resource that application $i \in \mathcal{M}_n$ requests.
Request matrix $R$ is an augmentation of all the request matrices (from all the service providers)  as given below. 
\begin{equation}
\label{eq:R_Reqco}
R^{}=\Biggl[\begin{smallmatrix}
\mathbf{r^{}_1}\\ 
.\\ 
.\\ 
.\\
\mathbf{r^{}_{{M}}}
\end{smallmatrix}\Biggr] = \Biggl[\begin{smallmatrix}
r^{}_{11} & \cdots  &\cdots   & r^{}_{1{K}} \\ 
. & . &.  &. \\ 
. & . &.  &. \\
. & . &.  &. \\
r^{}_{M1}&\cdots  &\cdots   & r^{}_{{M}{K}}
\end{smallmatrix}\Biggr]
\end{equation}
where $r_{ij}$ is the amount of $j^{th}$ resource that application $i \in \mathcal{M}$ requests.

The resource sharing and allocation framework, based on $R$ and  $C$, has to make an allocation decision $\mathcal{X}$ that optimizes the utilities $u_n(\mathcal{X})$ of all the service providers $n \in \mathcal{N}$ and satisfy user requests as well.  The allocation decision $\mathcal{X}$ is  a vector given by $\mathcal{X}=\{X^{(1)},X^{(2)},\cdots, X^{(N)}\}$ that indicates how much of each resource $k \in \mathcal{K}$ is allocated to application $i$ at service provider $n \in \mathcal{N}$. The matrix $X^{(n)}$ is an augmentation of different matrices $\{X_{i\in \mathcal{M}_1}^{(n)}, \cdots,X_{i\in \mathcal{M}_n}^{(n)},\cdots, X_{i\in \mathcal{M}_N}^{(n)} \}$, where $X_{i\in \mathcal{M}_j}^{(n)}$ is the allocation decision  at service provider $n$ for the applications $i$ originally belonging to service provider $j \in \mathcal{N}\backslash n$.  $X_{i\in \mathcal{M}_n}^{(n)}$ is the allocation decision at service provider $n$ for applications $i$ belonging to service provider $n$, i.e., the original applications of the service provider $n$. 
Mathematically, 

\begin{equation}
\label{eq:A_n}
X^{(n)}=\Biggl[\begin{smallmatrix}
\mathbf{x_1^{(n)}}\\ 
.\\ 
.\\ 
.\\
\mathbf{x_{{{M}}}^{(n)}}
\end{smallmatrix}\Biggr] = \Biggl[\begin{smallmatrix}
x_{11}^{(n)} & \cdots  &\cdots   & x_{1{K}}^{(n)} \\ 
. & . &.  &. \\ 
. & . &.  &. \\
. & . &.  &. \\
x_{{{M}}1}^{(n)}&\cdots  &\cdots   & x_{{{M}}{K}}^{(n)}
\end{smallmatrix}\Biggr]
\end{equation}
where $x_{ik}^{(n)}$ is the amount of resource $k\in \mathcal{K}$ belonging to player $n \in \mathcal{N}$ that is allocated  to application $i \in \mathcal{M}$.

\subsection{Optimization Problem}\label{sec:opt_problem}
In this section, we first present the resource allocation problem for a single service provider (no resource sharing with other service providers). Then we present the MOO problem for resource sharing and allocation. 
For a single cloud  $n \in \mathcal{N}$ (not  part of a coalition), the allocation decision matrix $X_{SO}^{(n)}$, where $SO$ stands for single objective,  is given by: 

\begin{equation}
\label{eq:X_n}
X_{SO}^{(n)}=\Biggl[\begin{smallmatrix}
\mathbf{x_1^{(n)}}\\ 
.\\ 
.\\ 
.\\
\mathbf{x_{{{M}_n}}^{(n)}}
\end{smallmatrix}\Biggr] = \Biggl[\begin{smallmatrix}
x_{11}^{(n)} & \cdots  &\cdots   & x_{1{K}}^{(n)} \\ 
. & . &.  &. \\ 
. & . &.  &. \\
. & . &.  &. \\
x_{{{M}_n}1}^{(n)}&\cdots  &\cdots   & x_{{{M}_n}{K}}^{(n)}
\end{smallmatrix}\Biggr].
\end{equation}

The optimization problem is: 
\begin{subequations}\label{eq:opt_single}
	\begin{align}
	\max_{{X}_{SO}^{(n)}}\quad& u_n({X}_{SO}^{(n)})+ \sum_{i \in \mathcal{M}_n}\sum_{k \in \mathcal{K}} \frac{x_{ik}^{(n)}}{r_{ik}^{(n)}} \quad \forall n \in \mathcal{N}, \label{eq:objsingle}\\
	\text{s.t.}\quad & \sum_{i} x_{ik}^{(n)}\leq C_{k}^{(n)} \quad \forall k \in \mathcal{K}, \quad \forall i \in \mathcal{M}_n, \label{eq:singlefirst} \displaybreak[0]\\
	& x_{ik}^{(n)} \leq r^{(n)}_{ik} \quad \forall\; i\in \mathcal{M}_n, k \in \mathcal{K}, \label{eq:singlesecond} \displaybreak[1]\\
	&  x_{ik}^{(n)} \geq 0 \quad \forall\; i\in \mathcal{M}, k \in \mathcal{K}. \label{eq:singlethird} \displaybreak[2]
	\end{align}
\end{subequations}
The goal of this single objective optimization problem for every service provider is   to  maximize its utility by allocating available resources only to its own applications. $\sum_{i \in \mathcal{M}_n}\sum_{k \in \mathcal{K}} {x_{ik}^{(n)}}/{r_{ik}^{(n)}}$ captures the satisfaction of resource requests by applications.  
\par Next, we present the resource allocation problem for different service providers that share their resources with each other. In the resource sharing case,  each service provider aims to  maximize sum of utilities obtained by a) allocating resources to its native applications, i.e., $u_n({X}_{i \in \mathcal{M}_n}^{(n)})$;  b) allocating resources to  applications of other service providers, i.e.,  $u_j^n({X}_{i \in \mathcal{M}_j}^{(n)}),\; \forall j\in \mathcal{N}\backslash n$; and c) satisfying requests of applications. This leads to: 
\begin{subequations}\label{eq:opt_higher}
	\begin{align}
	\max_{\mathcal{X}}\quad&  \bigg(w_nu_n({X}_{i \in \mathcal{M}_n}^{(n)})+\zeta_{n} \sum_{j \in \mathcal{N}\backslash n}  u^{n}_j({X}_{i \in \mathcal{M}_j}^{(n)})+\sum_{i \in \mathcal{M}_n}\sum_{k \in \mathcal{K}}\frac{\sum_{l \in \mathcal{{N}}}x_{ik}^{(l)}}{r_{ik}^{(n)}}\bigg) \quad \forall n \in \mathcal{N}, \label{eq:obj}\\ 
	\text{s.t.}\quad & \sum_{i} x_{ik}^{(n)}\leq C_{k}^{(n)} \quad \forall k \in \mathcal{K},  \forall n \in \mathcal{N}, \forall i \in \mathcal{M}, \label{eq:obj1} \displaybreak[0]\\
	& \sum_{j \in \mathcal{N}}x_{ik}^{(j)} \leq r^{(n)}_{ik} \quad \forall\; i\in \mathcal{M}, k \in \mathcal{K}, n \in \mathcal{N},  \label{eq:obj2}\displaybreak[1]\\
	&  x_{ik}^{(n)} \geq 0 \quad \forall\; i\in \mathcal{M}, k \in \mathcal{K}, n \in \mathcal{N}. \label{eq:obj3} \displaybreak[2]
	\end{align}
\end{subequations}
The first constraint in both optimization problems, \eqref{eq:opt_single}  and \eqref{eq:opt_higher}, indicates that the allocated resources cannot be more than the capacity. The second constraint  in \eqref{eq:singlesecond} and \eqref{eq:obj2} indicates, that the allocated resources should not be more than the required resources.  The last constraint, \eqref{eq:singlethird} and \eqref{eq:obj3},  indicates that allocation decision cannot be negative. 

$u^{n}_j({X}_{i \in \mathcal{M}_j}^{(n)})$ is the utility that domain $n$ receives for sharing its resources with domain $j$. 
$w_n$ is the weight associated with the utility $u_n$ and $\zeta_n$ is the weight affiliated with all $u_j^n$. 
In this paper, we assume that utility of the service providers is a non-decreasing monotone function. 
 We also assume that a subset $\mathcal{N}_1 \subset \mathcal{N}$ of service providers has resource deficit, while a subset $\mathcal{{N}}_2 \subset \mathcal{N}$ has a resource surplus. 

\subsection{Game Theoretic Solution}
As mentioned earlier, 
we model each cloud service provider as a player in our game to obtain the solution for the MOO problem in \eqref{eq:opt_higher}. Let $\mathcal{N}$ be the set of players that can play the resource sharing and allocation game. The \emph{value of coalition} for the game players $S \subseteq \mathcal{N}$ is given by \eqref{eq:payoff_function}, where $\mathcal{F}_S$ is the feasible set. 

\begin{align} \label{eq:payoff_function}
v({S})&=\sum_{{\substack{{n \in S}\\ {\mathcal{X} \in \mathcal{F}_S}}}}    
\bigg(w_n u_n(\mathcal{X})+ \zeta_{n}\sum_{\substack{j \in {S},\\j\neq n}} u^{n}_j(\mathcal{X})+\sum_{i \in \mathcal{M}_n}\sum_{k \in \mathcal{K}}\frac{\sum_{l \in \mathcal{{S}}}x_{ik}^{(l)}}{r_{ik}^{(n)}}\bigg).
\end{align}



\begin{theorem}
Resource allocation and sharing problem (with multiple objectives) for the aforementioned system model  can be modeled as a canonical cooperative game with NTU. 
\end{theorem}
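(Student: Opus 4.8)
The plan is to verify the two structural requirements in the definition of a canonical game—that it is in characteristic form and that it is superadditive—after first exhibiting the non-transferable-utility (NTU) characteristic function $\mathcal{V}(S)$ and confirming it meets the three defining properties of an NTU coalition game. First I would define, for each coalition $S \subseteq \mathcal{N}$, the set $\mathcal{V}(S)$ as the payoff vectors $(x_n)_{n \in S}$ attainable by some feasible allocation $\mathcal{X} \in \mathcal{F}_S$, where $\mathcal{F}_S$ is the allocation polytope obtained by restricting the constraints \eqref{eq:obj1}--\eqref{eq:obj3} to the capacities $\{C_k^{(n)} : n \in S\}$ and the applications $\bigcup_{n \in S}\mathcal{M}_n$ of the coalition members, together with its comprehensive (downward) closure. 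The non-transferability is then immediate: each provider $n$ carries its own heterogeneous objective $u_n$ appearing as a distinct summand of \eqref{eq:payoff_function}, and no common currency permits converting one provider's utility into another's, so a coalition is described by a set of attainable payoff vectors rather than a single scalar.

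Next I would check the three coalition-game properties and the characteristic-form requirement together. Since $\mathcal{F}_S$ is cut out by finitely many linear inequalities \eqref{eq:obj1}--\eqref{eq:obj3}, it is a closed, bounded, convex polytope; closedness and convexity of $\mathcal{V}(S)$ follow from this polytopal structure together with passage to the comprehensive hull (this is the point at which the regularity of the utilities enters), comprehensiveness holds by construction since the downward closure is included, and the boundedness of the individually-rational slice follows because the total allocable resources are finite. For characteristic form, inspecting \eqref{eq:payoff_function} shows that $v(S)$ is assembled only from terms indexed by $n \in S$, by $j \in S$ with $j \neq n$, and by the inner sum $\sum_{l \in S}$, while $\mathcal{F}_S$ references only the capacities and requests of members of $S$; hence $\mathcal{V}(S)$ depends solely on the members of $S$, which is precisely the requirement.

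The main work, and the step I expect to be the principal obstacle, is superadditivity in the form \eqref{eq:superadditivity}. Given disjoint coalitions $S_1$ and $S_2$, I would argue that any payoff vector whose $S_1$-block lies in $\mathcal{V}(S_1)$ and whose $S_2$-block lies in $\mathcal{V}(S_2)$ is attainable by the merged coalition $S_1 \cup S_2$ through the concatenated allocation in which members of $S_1$ serve only applications in $\bigcup_{n \in S_1}\mathcal{M}_n$ using only resources $\{C_k^{(n)}: n \in S_1\}$, and symmetrically for $S_2$, with no cross-boundary sharing. Under this schedule the constraints \eqref{eq:obj1}--\eqref{eq:obj3} decouple across the disjoint resource pools and disjoint application sets, so $\mathcal{F}_{S_1} \times \mathcal{F}_{S_2} \subseteq \mathcal{F}_{S_1 \cup S_2}$ and the per-provider payoffs are preserved exactly, yielding the inclusion \eqref{eq:superadditivity}. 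The delicate point to handle carefully is that the cross-utility terms $u_j^n$ and the shared-satisfaction term $\sum_{l \in S} x_{ik}^{(l)}$ must not be inadvertently inflated for the merged coalition; since each member serves only its own applications under the no-sharing schedule, these cross terms collapse to their within-$S_1$ and within-$S_2$ values, confirming that the separate payoffs are genuinely recovered rather than merely dominated. Combining characteristic form with superadditivity establishes the canonical NTU structure and completes the proof.
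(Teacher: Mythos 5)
Your proof is correct and substantially more complete than the paper's own, which disposes of the theorem in two lines: it asserts characteristic form because the payoff depends only on coalition members, and dismisses superadditivity with ``the proof follows from definition of monotone utilities,'' never exhibiting the embedding. Your route differs in the two places where the work actually lies. First, you prove superadditivity constructively, via the concatenated no-sharing allocation and the decoupling $\mathcal{F}_{S_1}\times\mathcal{F}_{S_2}\subseteq\mathcal{F}_{S_1\cup S_2}$, which recovers each block's payoffs exactly; this is the standard and rigorous argument, and notably it does not even need monotonicity for the inclusion \eqref{eq:superadditivity} --- monotonicity (plus comprehensiveness) only enters to upgrade weak dominance to membership, a point you correctly flag when you insist the cross-terms $u_j^n$ and $\sum_{l\in S}x_{ik}^{(l)}$ collapse to their within-block values. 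Second, you verify the three set-theoretic axioms of an NTU game (closedness, convexity, comprehensiveness, boundedness of the individually rational slice), which the paper skips entirely. The one soft spot is your claim that convexity of $\mathcal{V}(S)$ ``follows from the polytopal structure together with passage to the comprehensive hull'': the image of the polytope $\mathcal{F}_S$ under the utility map, even after downward closure, need not be convex unless the utilities are concave --- and the paper only assumes monotone non-decreasing utilities and indeed uses non-concave sigmoidal utilities \eqref{eq:ut1} in its experiments, so this step genuinely requires either a concavity hypothesis or a convexification device (e.g., allowing randomized allocations). Since the paper's proof never confronts this axiom at all, this is a caveat on material you added beyond the paper's argument, not a gap in the two conditions the theorem's proof is required to check.
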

\begin{proof}
To prove this, we need to show that the characteristic function of resource sharing and allocation problem satisfies the following two conditions:
	\begin{itemize}
		\item \textbf{Characteristic form of payoff:} As the utility function in a resource sharing and allocation problem  only relies on the service providers that are part of the coalition, the game or payoff function is of characteristic form.
		\item \textbf{Superadditivity:} 
	For any $S_1, S_2 \subseteq \mathcal{N}$ where $S_1 \cap S_2 =\emptyset$, $S_1,\; S_2 \subset (S_1 \cup S_2)$. The proof follows from definition of monotone utilities.
	
	\end{itemize}
\end{proof}
\begin{definition}[Convex Games]
	A coalition game is said to be convex if and only if for every player $n \in \mathcal{N}$, the marginal contribution of the player is non-decreasing with respect to (W.R.T.) set inclusion. Mathematically, for $S_1\subseteq S_2 \subseteq \mathcal{N}\backslash \{n\}$ 
	\begin{equation}\label{eq:convexgames}
	v(S_1 \cup \{n\})-v(S_1) \leq v(S_2 \cup \{n\})-v(S_2).
	\end{equation} 
\end{definition}
\begin{theorem}\label{thm:convex}
	Our canonical game is convex. 
\end{theorem}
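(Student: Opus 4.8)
The goal is to verify the supermodularity inequality \eqref{eq:convexgames}, namely that the marginal contribution of any player $n$ is non-decreasing under set inclusion. The plan is to write $v(S)=\max_{\mathcal{X}\in\mathcal{F}_S}\Phi_S(\mathcal{X})$, where $\Phi_S(\mathcal{X})$ denotes the per-coalition objective appearing in \eqref{eq:payoff_function}, and then to track how both the objective $\Phi_S$ and the feasible set $\mathcal{F}_S$ grow as the coalition grows. First I would record two monotonicity facts about the feasible region: enlarging a coalition adds the new member's resources $C^{(n)}$ to the shared pool, which relaxes the capacity constraints \eqref{eq:obj1}, and it adds that member's applications $\mathcal{M}_n$ as new recipients. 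Consequently $\mathcal{F}_S$ embeds into $\mathcal{F}_{S'}$ for $S\subseteq S'$ once the new coordinates are padded with zeros, so $v$ is at least monotone.

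Next I would decompose the marginal contribution $v(S\cup\{n\})-v(S)$ into the three families of terms that player $n$ introduces: its native weighted utility $w_n u_n$; the pairwise sharing utilities $\zeta_n u_j^n$ and $\zeta_j u_n^j$ that couple $n$ with every incumbent $j\in S$; and the request-satisfaction terms attached to $\mathcal{M}_n$ and to $n$'s freshly pooled capacity. The central observation is that the number of pairwise sharing terms created when $n$ joins scales with $|S|$, so a larger host coalition generates strictly more cross terms. Combined with the fact that each $u_j^n$ is monotone non-decreasing and that the enlarged resource pool can only raise the optimum, this is what should drive the inequality in the direction required by \eqref{eq:convexgames}.

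To make the comparison between $S_1$ and $S_2$ rigorous I would rewrite \eqref{eq:convexgames} as $v(S_1\cup\{n\})+v(S_2)\le v(S_2\cup\{n\})+v(S_1)$ and run an exchange argument on optimizers. Taking an optimizer $\mathcal{X}^\star$ for $S_1\cup\{n\}$ and an optimizer $\mathcal{Y}^\star$ for $S_2$, I would build a feasible allocation for $S_2\cup\{n\}$ by overlaying the portion of $\mathcal{X}^\star$ in which $n$ participates onto $\mathcal{Y}^\star$, and a feasible allocation for $S_1$ by restricting $\mathcal{X}^\star$ back to $S_1$; the claim is then that monotonicity of the utilities forces the two recombined objectives to dominate the left-hand side.

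I expect the recombination step to be the main obstacle. Because the capacity constraints \eqref{eq:obj1} and the request constraints \eqref{eq:obj2} couple all players through shared pools, $v$ does not decompose additively across players, and one must rule out that competition for those shared resources erodes the marginal gain as $S$ grows. The resolution I would pursue rests on two structural features of \eqref{eq:payoff_function}: each sharing utility $u_j^n$ enters as its own additive, non-transferable term rather than through a single fungible budget, and every admitted player contributes its own capacity $C^{(n)}$, so enlarging the coalition never tightens a binding capacity constraint for the incumbents while it always appends new non-negative terms. Establishing this \emph{no destructive competition} property precisely, i.e.\ showing the overlaid allocation stays inside $\mathcal{F}_{S_2\cup\{n\}}$ while preserving the value accounting, is the crux of the argument.
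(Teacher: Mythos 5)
Your plan has the right overall shape---rewriting \eqref{eq:convexgames} as $v(S_1\cup\{n\})+v(S_2)\le v(S_2\cup\{n\})+v(S_1)$ and recombining optimizers is the standard rigorous route to supermodularity of an optimization-defined value---but the proposal stops exactly at the step that carries the whole theorem, and that step fails as stated. Your ``no destructive competition'' lemma asserts that overlaying the $n$-part of an optimizer $\mathcal{X}^\star$ for $S_1\cup\{n\}$ onto an optimizer $\mathcal{Y}^\star$ for $S_2$ stays inside $\mathcal{F}_{S_2\cup\{n\}}$. The per-provider capacity constraints \eqref{eq:obj1} are indeed harmless, as you note, but the request constraints \eqref{eq:obj2} are \emph{shared across providers}: the total allocation $\sum_j x_{ik}^{(j)}$ to each application is capped by $r_{ik}$. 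If $\mathcal{Y}^\star$ already saturates application $i$'s request within $S_2$ (which a richer coalition will typically do), adding the allocation $n$ made to $i$ under $\mathcal{X}^\star$ violates \eqref{eq:obj2}, and trimming it destroys the value accounting you need. Your second structural feature (``every admitted player contributes its own capacity, so enlarging the coalition never tightens a binding constraint'') is true of \eqref{eq:obj1} but simply false of \eqref{eq:obj2}, and your first feature (additivity of the sharing utilities) does not touch this coupling.

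The failure is substantive, not bookkeeping: because the satisfaction term $\sum_{l}x_{ik}^{(l)}/r_{ik}$ in \eqref{eq:payoff_function} saturates at $1$ per pair $(i,k)$, marginal contributions exhibit \emph{diminishing} returns. Take everywhere-zero utilities (monotone non-decreasing, hence within the paper's hypotheses), one resource, $S_1=\{1\}$ where player $1$ has one application requesting one unit and no capacity, $S_2=\{1,2\}$ where player $2$ has one unit of capacity and no applications, and let the entrant $t$ have one unit of capacity and no applications. Then $v(S_1)=0$, $v(S_1\cup\{t\})=1$, but $v(S_2)=v(S_2\cup\{t\})=1$: the marginal contribution drops from $1$ to $0$, contradicting \eqref{eq:convexgames}. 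So no refinement of the overlay can rescue the exchange argument without extra assumptions (e.g., requests exceeding total pooled capacity so that \eqref{eq:obj2} never binds). For comparison, the paper's proof expands $v(S_2\cup\{t\})-v(S_1\cup\{t\})$ algebraically at a single common allocation $\mathcal{X}$, cancels terms (even summing over a set difference $\mathcal{F}_{S_2}\backslash\mathcal{F}_{S_1}$ of feasible sets), and invokes monotonicity once at the end---it never confronts the feasibility issue at all. Your attempt is more honest in that it isolates precisely where the work must happen, but the crux you flagged is a genuine gap, not a deferred technicality.
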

\begin{proof}
	Let us consider two coalition $S_1$ and $S_2$, where $S_1\subseteq S_2 \subseteq \mathcal{N}\backslash\{t\}$, $t \in \mathcal{N}$, and  $\mathcal{F}_{S1}$ and $\mathcal{F}_{S2}$ are the feasible sets for $S_1$ and $S_2$ respectively that it achieves by allocating resources to its own applications. We calculate $v(S_2 \cup \{t\})-v(S_1 \cup \{t\})$ in Equation \eqref{eq:convexityProof}, where $u_t^{(S_1)}$ and $u_t^{(S_2)}$ are the utilities of player $t$ in $S_1$ and $S_2$ respectively.  
	
	\begin{figure*}[!t]
		\normalsize
		\begin{align}\label{eq:convexityProof}
		=& \sum_{\substack{\text{$n \in {S_2}$},\\ \text{$\mathcal{X} \in \mathcal{F}_{S2}$}}}
		\bigg(w_n u_n(\mathcal{X})+ \zeta_{n}\sum_{\substack{j \in {S_2}\cup t,\\ j\neq n}} u^{n}_j(\mathcal{X})+ \sum_{i \in \mathcal{M}_n}\sum_{k \in \mathcal{K}}\frac{\sum_{l \in ({{S}_2\cup t})}x_{ik}^{(l)}}{r_{ik}^{(n)}}\bigg) + w_t u_t ^{(S_2)}(\mathcal{X}) + \zeta_t\sum_{\substack{n \in S_2,\\ n \neq t}} u^{t}_{n}(\mathcal{X})+\nonumber \\ 
	&	\sum_{i \in \mathcal{M}_t}\sum_{k \in \mathcal{K}}\frac{\sum_{l \in {({S}_2}\cup t)}x_{ik}^{(l)}}{r_{ik}^{(t)}} - \Bigg(\sum_{\substack{\text{$n \in {S_1}$},\\ \text{$\mathcal{X} \in \mathcal{F}_{S1}$}}}
		\bigg(w_n u_n(\mathcal{X})+ \zeta_{n}\sum_{\substack{j \in {S_1}\cup t , \\j \neq n}} u^{n}_j(\mathcal{X})+\sum_{i \in \mathcal{M}_n}\sum_{k \in \mathcal{K}}\frac{\sum_{l \in ({{S}_1\cup t})}x_{ik}^{(l)}}{r_{ik}^{(n)}}\bigg) + 
		  w_t u_t ^{(S_1)}(\mathcal{X}) + \nonumber\\ 
		  & \zeta_t\sum_{\substack{n \in S_1,\\ n \neq t}} u^{t}_{n}(\mathcal{X})+ \sum_{i \in \mathcal{M}_t}\sum_{k \in \mathcal{K}}\frac{\sum_{l \in ({{S}_1\cup t})}x_{ik}^{(l)}}{r_{ik}^{(t)}} \Bigg) \nonumber \\
		&= \sum_{\substack{\text{$n \in {S_2}\backslash S_1$},\\ \text{$\mathcal{X} \in \mathcal{F}_{S2}\backslash\mathcal{F}_{S1}$}}}
		\bigg(w_n u_n(\mathcal{X})+ \zeta_{n}\sum_{\substack{j \in {\{(S_2 \cup t) \backslash (S_1\cup t) \}},\\ j \neq n}} u^{n}_j(\mathcal{X})+\sum_{i \in \mathcal{M}_n}\sum_{k \in \mathcal{K}}\frac{\sum_{l \in {\{(S_2 \cup t)\backslash (S_1 \cup t)\}}}x_{ik}^{(l)}}{r_{ik}^{(n)}} \bigg) + w_t u_t ^{(S_2)}(\mathcal{X}) 
		+  \nonumber \\
		&\zeta_t\sum_{\substack{n \in S_2,\\ n \neq t}} u^{t}_{n}(\mathcal{X}) -  \quad  w_t u_t ^{(S_1)}(\mathcal{X}) - \zeta_t\sum_{\substack{n \in S_1, \\ n \neq t}} u^{t}_{n}(\mathcal{X})+\sum_{i \in \mathcal{M}_t}\sum_{k \in \mathcal{K}}\frac{\sum_{l \in \{(S_2 \cup t)\backslash (S_1\cup t)\}}x_{ik}^{(l)}}{r_{ik}^{(t)}}\nonumber \\
		& = v(S_2)- v(S_1)+ w_t u_t ^{(S_2)}(\mathcal{X}) + \zeta_t\sum_{\substack{n \in S_2,\\ n \neq t}} u^{t}_{n}(\mathcal{X}) -  w_t u_t ^{(S_1)}(\mathcal{X}) - \zeta_t\sum_{\substack{n \in S_1,\\ n \neq t}} u^{t}_{n}(\mathcal{X})+\sum_{i \in \mathcal{M}_t}\sum_{k \in \mathcal{K}}\frac{\sum_{l \in \{(S_2 \cup t)\backslash (S_1\cup t)\}}x_{ik}^{(l)}}{r_{ik}^{(t)}}  \nonumber \\
		& \geq v(S_2)- v(S_1)
		\end{align}
		\hrulefill
	\vspace{-0.3in}
	\end{figure*}

	The last inequality in \eqref{eq:convexityProof} follows from the fact that $ w_t u_t ^{(S_2)}(\mathcal{X}) + \zeta_t\sum_{n \in S_2, n \neq t} u^{t}_{l}(\mathcal{X}) -  w_t u_t ^{(S_1)}(\mathcal{X}) - \zeta_t\sum_{n \in S_1, n \neq t} u^{t}_{l}(\mathcal{X})\geq 0$ (monotonicity of the utility).
\end{proof}

\begin{remark}
	The core of any convex game $(\mathcal{N},\mathcal{V})$ is non-empty and large \cite{sharkey1982cooperative}. 
\end{remark}

\begin{remark}\label{lemma:MOOGame}
	Our canonical cooperative game $(\mathcal{N}, \mathcal{V})$ with NTU can be used to obtain the Pareto-optimal solutions for the multi-objective optimization problem given in \eqref{eq:opt_higher}.
\end{remark}

While the existence of core, i.e., core being non-empty, guarantees the grand coalition is stable, finding a suitable allocation from the core is challenging particularly when the core is large (our game). Algorithm \ref{algo:alg1} provides an allocation from the core\footnote{Provided that the aforementioned assumptions hold.}, by solving $|\mathcal{{N}}|+1$ optimization problems. 

\begin{algorithm}
	\begin{algorithmic}[]
		\State \textbf{Input}: $R, C,$ and vector of utility functions of all players $\mathbf{u}$ 
		\State \textbf{Output}: $\mathcal{X}$, ${\mathbf{u}(\mathcal{X})} $
		\State \textbf{Step $1$:}  $ \mathbf{u}(\mathcal{X}) \leftarrow$0,  $ \mathcal{X} \leftarrow$0
		\State \textbf{Step $2$:}
		\For{\texttt{$n \in \mathcal{N}$}}
		\State $ {v(\{n\})}\leftarrow$\texttt{Optimal objective function value in  \eqref{eq:opt_single}} 
		\EndFor
		\State \textbf{Step $3$:}   $ {\mathcal{X}}\leftarrow$\texttt{Optimal allocation decision  from  \eqref{eq:opt_higher2}} \\\quad \quad \quad \quad
		$ {\mathbf{u}(\mathcal{X})}\leftarrow$\texttt{Payoff vector from  \eqref{eq:opt_higher2}} 
	\end{algorithmic}
	\caption{Game-theoretic Pareto optimal allocation }
	\label{algo:alg1}
\end{algorithm}

 \begin{theorem}\label{thm:alg1fromcore}
	The allocation decision obtained using Algorithm \ref{algo:alg1} lies in the core. 
\end{theorem}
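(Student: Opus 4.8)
The plan is to verify directly that the payoff vector $\mathbf{x} = \mathbf{u}(\mathcal{X})$ returned by Algorithm~\ref{algo:alg1} meets the two defining requirements of the NTU core in \eqref{eq:coreTU}: first, that $\mathbf{x}$ is a feasible and group-rational allocation for the grand coalition, i.e.\ $\mathbf{x} \in \mathcal{V}(\mathcal{N})$ with $\sum_{n \in \mathcal{N}} x_n = v(\mathcal{N})$; and second, that no coalition $S \subset \mathcal{N}$ can block $\mathbf{x}$, i.e.\ there is no $\mathbf{y} \in \mathcal{V}(S)$ with $y_n > x_n$ for every $n \in S$. I would organize the argument so that the imputation properties come first and the harder no-blocking property comes last, reusing convexity (Theorem~\ref{thm:convex}) and superadditivity (established in the canonical-game theorem) as the main engines.

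First I would dispatch feasibility and the imputation properties. Step~3 of the algorithm maximizes the grand-coalition objective over the feasible set of the resource-sharing problem (the constrained variant of \eqref{eq:opt_higher} solved in Step~3), so the returned allocation $\mathcal{X}$ satisfies the capacity, demand, and non-negativity constraints; hence the induced payoff vector lies in $\mathcal{V}(\mathcal{N})$. Because Step~3 attains the optimum of the grand-coalition value \eqref{eq:payoff_function} with $S = \mathcal{N}$, we get $\sum_{n} x_n = v(\mathcal{N})$, which is group rationality. Individual rationality, $x_n \geq v(\{n\})$ for all $n$, follows because the problem solved in Step~3 carries the per-player lower-bound constraints built from the standalone values $v(\{n\})$ computed in Step~2 (each being the optimum of the single-provider problem \eqref{eq:opt_single}). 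Together these show that $\mathbf{x}$ is an imputation.

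For the no-blocking property I would argue by contradiction. Suppose some $S \subset \mathcal{N}$ blocks, so there is a feasible $S$-allocation achieving $\mathbf{y} \in \mathcal{V}(S)$ with $y_n > x_n$ for all $n \in S$. I would then lift this to a grand-coalition allocation: keep the blocking allocation on the resources and applications of $S$, and allocate the remaining resources $\sum_{m \notin S} C^{(m)}$ as in $\mathcal{X}$. By superadditivity \eqref{eq:superadditivity} applied to $S_1 = S$ and $S_2 = \mathcal{N} \setminus S$, the combined payoff vector lies in $\mathcal{V}(\mathcal{N})$, and because every per-player term in \eqref{eq:payoff_function} is a monotone non-decreasing function of the allocation, pooling the extra resources cannot decrease the payoff of any member outside $S$. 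The resulting grand-coalition total is then at least $\sum_{n \in S} y_n + \sum_{m \notin S} x_m > \sum_{n} x_n = v(\mathcal{N})$, contradicting the optimality of $\mathcal{X}$ in Step~3. Hence no such $S$ exists and $\mathbf{x} \in \mathcal{C}_{NTU}$.

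The step I expect to be the main obstacle is the lifting construction in the last paragraph, because the payoffs are coupled through a shared allocation: a provider outside $S$ may, in $\mathcal{X}$, draw on resources contributed by members of $S$, so simply ``freezing'' the blocking allocation on $S$ could in principle lower the non-members' utilities. This is exactly where the monotone non-decreasing utility assumption and the additive resource-pooling model must be used carefully---I would make the argument rigorous by writing out the three terms of \eqref{eq:payoff_function} for the lifted allocation and bounding each one below by monotonicity, rather than treating the payoff as a black box. As a consistency check I would also invoke the remark that convexity makes the core non-empty and large, confirming that the imputation produced by the algorithm indeed lands inside this (possibly large) core rather than merely on its boundary.
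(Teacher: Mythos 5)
Your handling of parts (a) and (b) coincides with the paper's own proof: individual rationality comes from the per-player constraint \eqref{eq:objh2} built from the standalone values $v(\{n\})$ computed in Step~2, and group rationality from the fact that Step~3 maximizes the sum objective \eqref{eq:objh1} over $\mathcal{V}(\mathcal{N})$. Where you diverge is the no-blocking property. The paper disposes of it in one sentence, appealing to superadditivity and monotonicity; you instead attempt an explicit contradiction via a lifting construction. That instinct is sound --- the one-line assertion is the weakest point of the paper's argument --- but your lift does not go through, and the repair you sketch cannot close it.

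Concretely, two steps fail. First, feasibility: the blocking allocation for $S$ may already satisfy an application $i \in \mathcal{M}_S$ in full, while $\mathcal{X}$ additionally routes non-members' resources to $i$; superposing the two violates the demand constraint \eqref{eq:obj2}, so the lifted point need not lie in $\mathcal{V}(\mathcal{N})$ at all (and trimming the excess reduces the non-members' sharing terms $u^m_j$ below their values in $\mathcal{X}$). Second, and more fundamentally, the claim that ``pooling the extra resources cannot decrease the payoff of any member outside $S$'' is false in this model: in $\mathcal{X}$, a non-member $m$'s satisfaction term $\sum_{i \in \mathcal{M}_m}\sum_{k \in \mathcal{K}} \bigl(\sum_{l} x_{ik}^{(l)}\bigr)/r_{ik}^{(m)}$ may be fed by resources contributed by members of $S$, and freezing the blocking allocation --- which confines $S$'s resources to $S$'s own applications --- strictly reduces those arguments. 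Monotonicity then bounds the non-members' payoffs in the wrong direction: it cannot recover $x_m$, so the total $\sum_{n \in S} y_n + \sum_{m \notin S} x_m$ is never exhibited by a feasible allocation and no contradiction with the optimality of Step~3 arises. What superadditivity \eqref{eq:superadditivity} actually guarantees is only a grand-coalition vector paying $y_n$ to each $n \in S$ together with payoffs $z_m$ achievable by $\mathcal{N}\setminus S$ on its own, and $z_m$ may be far below $x_m$, so the sum comparison fails. You identified this obstacle yourself, but ``writing out the three terms of \eqref{eq:payoff_function} and bounding each below by monotonicity'' cannot fix an argument whose arguments genuinely decrease; some different idea (e.g., exploiting convexity from Theorem~\ref{thm:convex} quantitatively, as in the theory of convex games where marginal-vector allocations are shown to be unblockable) would be needed. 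In fairness, the paper's own proof leaves exactly this step unproven as well.
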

\begin{proof}
	To prove the theorem, we need to show that the allocation decision obtained using Algorithm \ref{algo:alg1}:  a) is individually rational; b) is group rational; and c) no players  have the incentive to leave the grand coalition and form another sub-coalition $S \subset \mathcal{N}$. 
	\newline \indent\textbf{Individual Rationality:} For each player $n \in \mathcal{N}$, the solution obtained using \eqref{eq:opt_higher2} is individual rational due to the constraint in \eqref{eq:objh2}. 
	Hence the solution obtained as a result of Algorithm \ref{algo:alg1} is individually rational.
	\newline \indent\textbf{Group Rationality:} The value of the grand coalition $v\{\mathcal{N}\}$ as per Equation \eqref{eq:objh1} is the sum of utilities of all players that they achieve from the pay-off vector  $\mathcal{V}(\mathcal{N})\subseteq \mathbb{R}^{|\mathcal{N}|}$. Hence the allocation obtained from Algorithm \ref{algo:alg1} is group rational.  
	\par Furthermore, due to super-additivity of the game and monotonic non-decreasing nature of the utilities, no subgroup of players have an incentive to form a smaller coalition. Hence Algorithm \ref{algo:alg1} provides a solution from the core.  
\end{proof}
In algorithm \ref{algo:alg1}, we first solve the single objective optimization problem \eqref{eq:opt_single} for all players $n \in \mathcal{N}$. We then solve the problem in \eqref{eq:opt_higher2} that provides the allocation from the core. $v(\{n\})$ in \eqref{eq:objh2} is the payoff a player $n$ receives when working alone.
\begin{subequations}\label{eq:opt_higher2}
	\begin{align}
	\max_{\mathcal{X}}\quad& \sum_{n \in \mathcal{{N}}} \bigg(w_nu_n({X}_{i \in \mathcal{M}_n}^{(n)})+\zeta_{n} \sum_{j \in \mathcal{N}\backslash n}  u^{n}_j({X}_{i \in \mathcal{M}_j}^{(n)})+\sum_{i \in \mathcal{M}_n}\sum_{k \in \mathcal{K}}\frac{\sum_{l \in \mathcal{{N}}}x_{ik}^{(l)}}{r_{ik}^{(n)}}\bigg),\label{eq:objh1}\\ 
	\text{s.t.}\quad & \texttt{constraints in \eqref{eq:obj1}-\eqref{eq:obj3}},  \nonumber \\
& \bigg(w_nu_n({X}_{i \in \mathcal{M}_n}^{(n)})+\zeta_{n} \sum_{j \in \mathcal{N}\backslash n}  u^{n}_j({X}_{i \in \mathcal{M}_j}^{(n)})+
\sum_{i \in \mathcal{M}_n}\sum_{k \in \mathcal{K}}\frac{\sum_{l \in \mathcal{{N}}}x_{ik}^{(l)}}{r_{ik}^{(n)}}\bigg) \geq  {v(\{n\})}, \forall n \in \mathcal{N}\label{eq:objh2}. 
	\end{align}
\end{subequations}

In the next section, we evaluate the performance of our resource sharing and allocation framework. 

\section{Experimental Results}\label{sec:exp_results}

We evaluate the performance of  proposed  resource sharing and allocation framework for  a number of settings as shown in Table \ref{tab:settings}. 
 Each player has three different types of resources ($K=3$), i.e., storage, communication and computation. Without loss of generality, the model can be extended to include other type of resources/parameters. 
We used linear and sigmoidal utilities (see ~\eqref{eq:ut1}) for all the players. However, the results  hold for any monotone non-decreasing utility.
\begin{align}\label{eq:ut1}
u_n(\mathcal{X})=\sum_{i \in \mathcal{M}_n}\bigg(\sum_{k=1}^{{K}}\frac{1}{1+e^{-\mu(x_{ik}^{(n)}-r^{(n)}_{ik})}}\bigg) \quad \forall n \in \mathcal{N}. 
\end{align}
$\mu$ is chosen to be either $0.01$  or $0.1$ to capture the requirements of different applications. 
The request matrices $R^{(n)}, \forall n \in \mathcal{N}$ and the capacity vectors $C^{(n)}, \forall n \in \mathcal{N}$ are randomly generated for each setting within a pre-specified range\footnote{The larger the number of applications in our simulation settings, the larger is range for random number generation.}. To show the advantage of resource sharing, we allocate larger capacities to certain players that, for improving their utilities,  share the available resources with other domains and assist other players in meeting demand. Since all the service providers do not differentiate between their native and applications of other service providers, we set $w_n$ and $\zeta_{n},$ $\forall n \in \mathcal{N}$ to $1$.   
\begin{table}[]
	\centering
	\caption{Simulation network settings.}
	\begin{tabular}{|l|l|}
		\hline
		\textbf{Setting}   & \textbf{Parameters} \\ \hline
		\textbf{1} & $N=3, M_n=3,\forall n \in \mathcal{N}$                    \\ \hline
		\textbf{2} & $N=3, M_n=20,\forall n \in \mathcal{N}$                    \\ \hline
		\textbf{3} & $N=6, M_n=6,\forall n \in \mathcal{N}$                    \\ \hline
		\textbf{4} & $N=6, M_n=20,\forall n \in \mathcal{N}$                   \\ \hline
	\end{tabular}
	\vspace{-0.1in}
	\label{tab:settings}
\end{table}
The simulations were run in \texttt{Matlab R2016b} on a \texttt{Core-i7} processor with \texttt{16 GB RAM}. To solve the optimization problems, we used the \texttt{OPTI-toolbox}. Below, we provide detailed experimental results. 

\subsection{Verification of game-theoretic properties}
In Table \ref{tab:coalition}, we present results  for a 3-player 20-application game that verify different game theoretic properties  such as individually rationality, group rationality,  super additivity and show that the obtained allocation is from the core. The pay-off all players receive in the grand coalition, i.e., $\{1, 2, 3\}$ is at least as good as players $1$, $2$ and $3$ working alone. This shows that the solution obtained using Algorithm \ref{algo:alg1} for the grand coalition is individually rational. Similarly, the value of coalition is the sum of pay-off all players receive is the value of coalition, hence our solution is group rational. Furthermore, with an increase in the coalition size, value of coalition also increases. Hence, the grand coalition has the largest value, which shows the superadditive nature of the game. Also,  no set of players has any incentive to divert from the grand coalition and form a smaller coalition. Hence, the grand coalition is stable and the allocation we obtain using Algorithm \ref{algo:alg1} is from the core.  Similar results were seen for other settings given in Table \ref{tab:settings}. However, we do not include them here due to space constraints.

\begin{table}[]
	\caption{Player payoff in different coalitions for a $3$ player - $20$ application game with $\mu=0.01$}
		\centering
	\begin{tabular}{|l|l|l|l|l|}
		\hline
		\textbf{Coalition}       & \textbf{Player 1} & \textbf{Player 2} & \textbf{Player 3} & \textbf{Value of coalition}  \\ \hline
		\textbf{\{1\}}               & 584.40            & 0.00              & 0.00              & 584.40                   \\ \hline
		\textbf{\{2\}}               & 0             & 90            & 0              & 90                    \\ \hline
		\textbf{\{3\}}               & 0              & 0              & 90             & 90                    \\ \hline
		\textbf{\{1, 2\}}              & 584.40            & 227.15            & 0              & 811.55                   \\ \hline
		\textbf{\{1, 3\}}              & 584.40            & 0              & 229.84            & 814.24                   \\ \hline
		\textbf{\{2, 3\}}              & 0              & 118.91            & 119.50            & 238.41                   \\ \hline
		\textbf{\{1, 2, 3\}} & 584.40            & 205.11            & 202.30            & 991.81             \\ \hline
	\end{tabular}
\label{tab:coalition}
\end{table}

\subsection{Efficacy of the resource sharing framework}
\begin{figure*}
		\centering
	\includegraphics[width=1.05\textwidth]{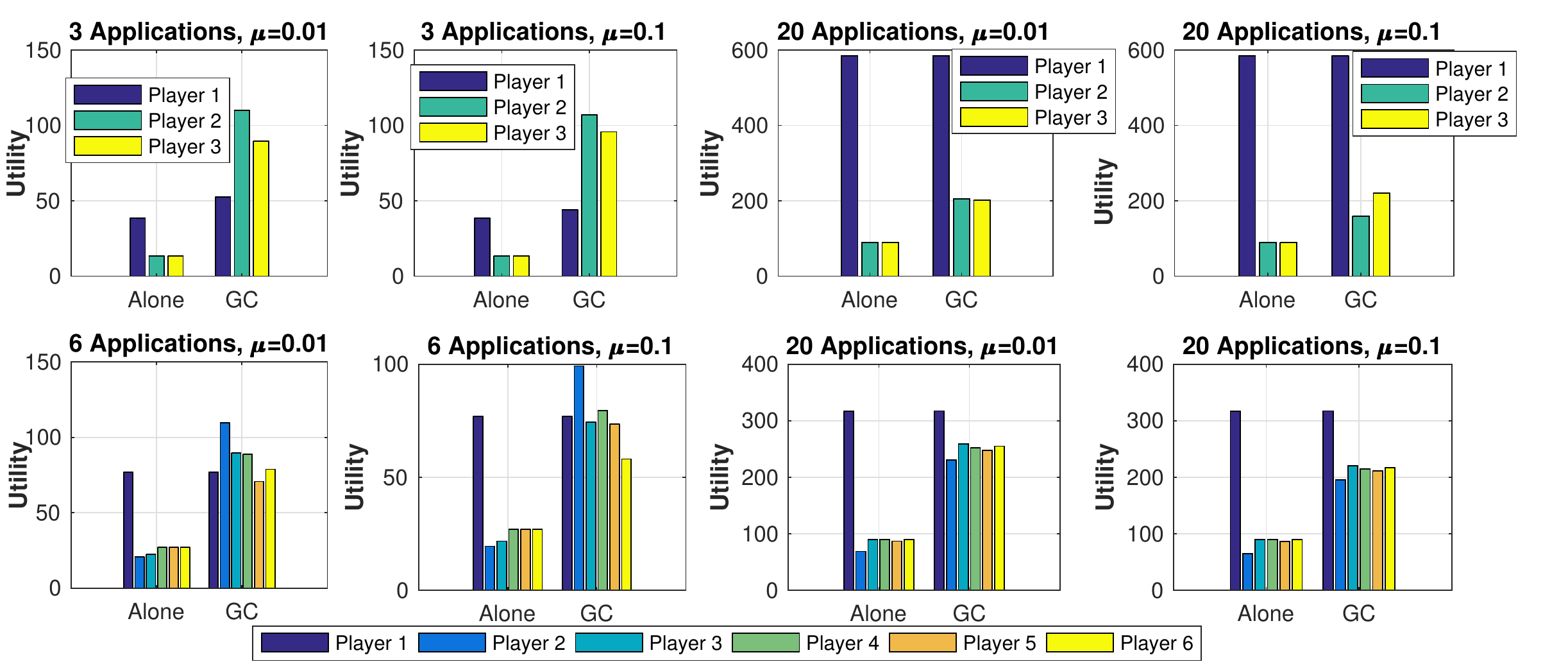}
	\caption{Player utilities working alone and in the grand coalition (GC) for different simulation settings with $\mu=0.01$ and $\mu=0.1$.  }
	\protect\label{fig:util_all}
\end{figure*}
\subsubsection{Player utilities}
Figure \ref{fig:util_all} shows the utility of game players in different simulation settings. It is evident from the figure that the utility of players in grand coalition (GC) in all simulation settings either remains as good as working alone (very small number of cases) or drastically improves with our game theoretic resource sharing framework. This shows that  cloud service providers can greatly benefit by employing a resource sharing scheme. The average improvement in the utility of players seen in our simulations is almost 207\%. However, it is worth mentioning that the improvement greatly depends on parameters such as resource surplus, deficit, and the utilities used.

\subsubsection{User satisfaction}
Satisfying user requests is important for cloud service providers as it is mandated by the SLA. Furthermore, it helps in retaining their users (customers) resulting in long-term profits. Figure \ref{fig:sat01} shows the average user request satisfaction in different player-application settings. For a $3-$ player game with $3$ and $20$ applications respectively, player $1$ has a resource deficit and is not able to satisfy all of its user's request when operating alone. Players $2$ and $3$ have a resource surplus so they can satisfy their users even working alone. When all the three players work in a coalition, the request satisfaction for all three players is 100\%. Similarly, for a $6-$player game with $6$ and $20$ applications respectively, we see that players who could not satisfy their applications alone are able to satisfy all their user requests by joining the coalition and ``renting" resources from the players that have a resource surplus. It is worth mentioning that resource sharing can drastically improve user satisfaction. However, the extent of improvement depends on the nature of the utility function, capacity and resource requests. 
\begin{figure}
		\centering
	\includegraphics[width=0.59\textwidth]{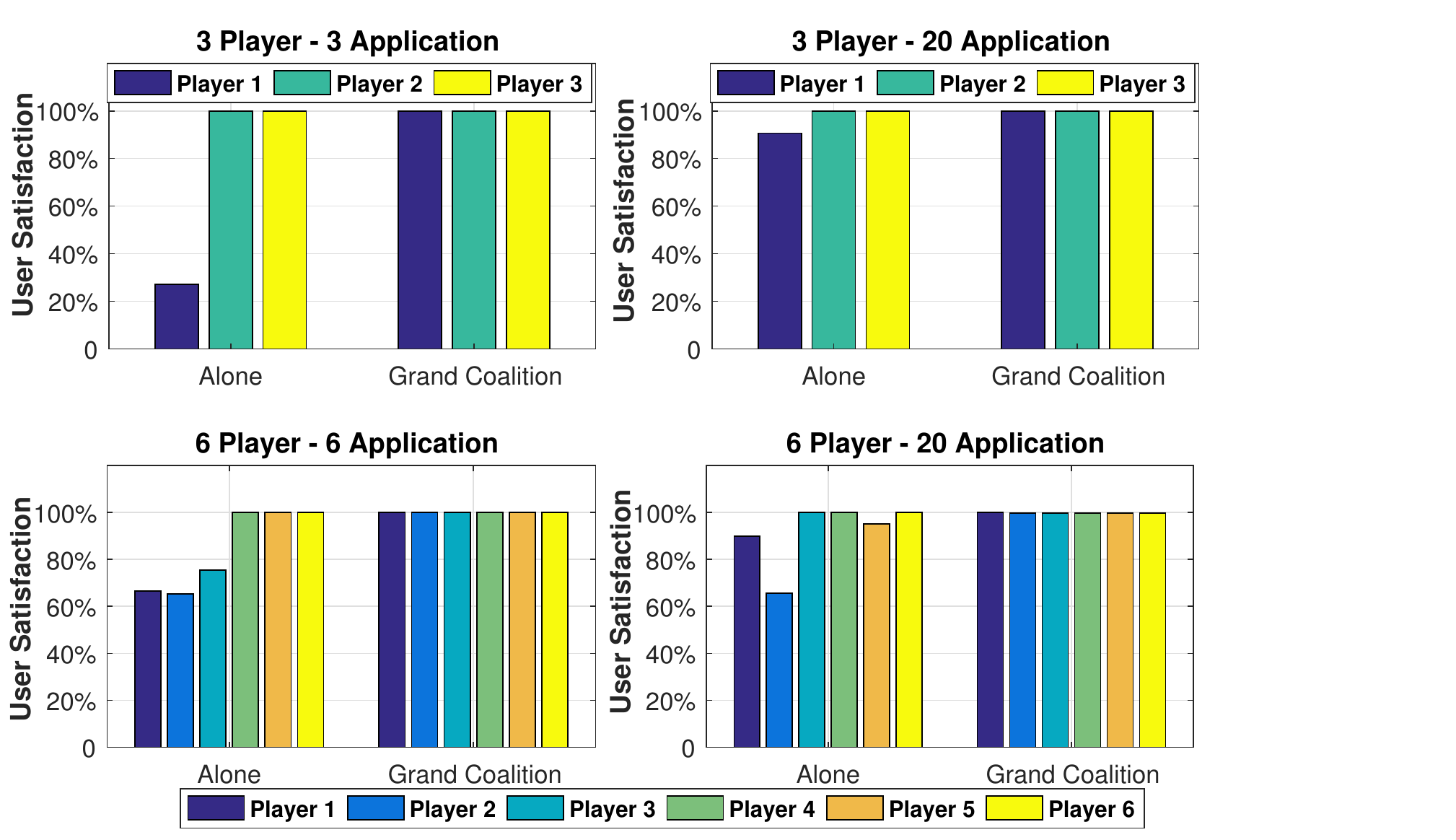}
	\caption{Improvement in user request satisfaction for various player-application settings and $\mu=0.01$.  }
		\protect\label{fig:sat01}
			\vspace{-0.1in}
\end{figure}

\subsubsection{Resource utilization}
Figure \ref{fig:ut01} shows the resource utilization at different players in various simulation settings.  For a $3-$player game with $3$ and $20$ applications respectively, we see that the resource utilization for player $2$ and $3$ (players with resource surplus)  increases in the grand coalition when compared with working alone. This is because these players rent out their resources to player $1$ that may need them resulting in an increased utility. Player $1$ with resource deficit had 100\% resource utilization rate working alone. However, resource utilization for player $1$ in grand coalition either may remain the same or reduce, since its users' requests may be satisfied by other players. Similar results are seen for a $6-$player game with $6$ and $20$ applications  as evident from the figure. 

\begin{figure}
		\centering
	\includegraphics[width=0.59\textwidth]{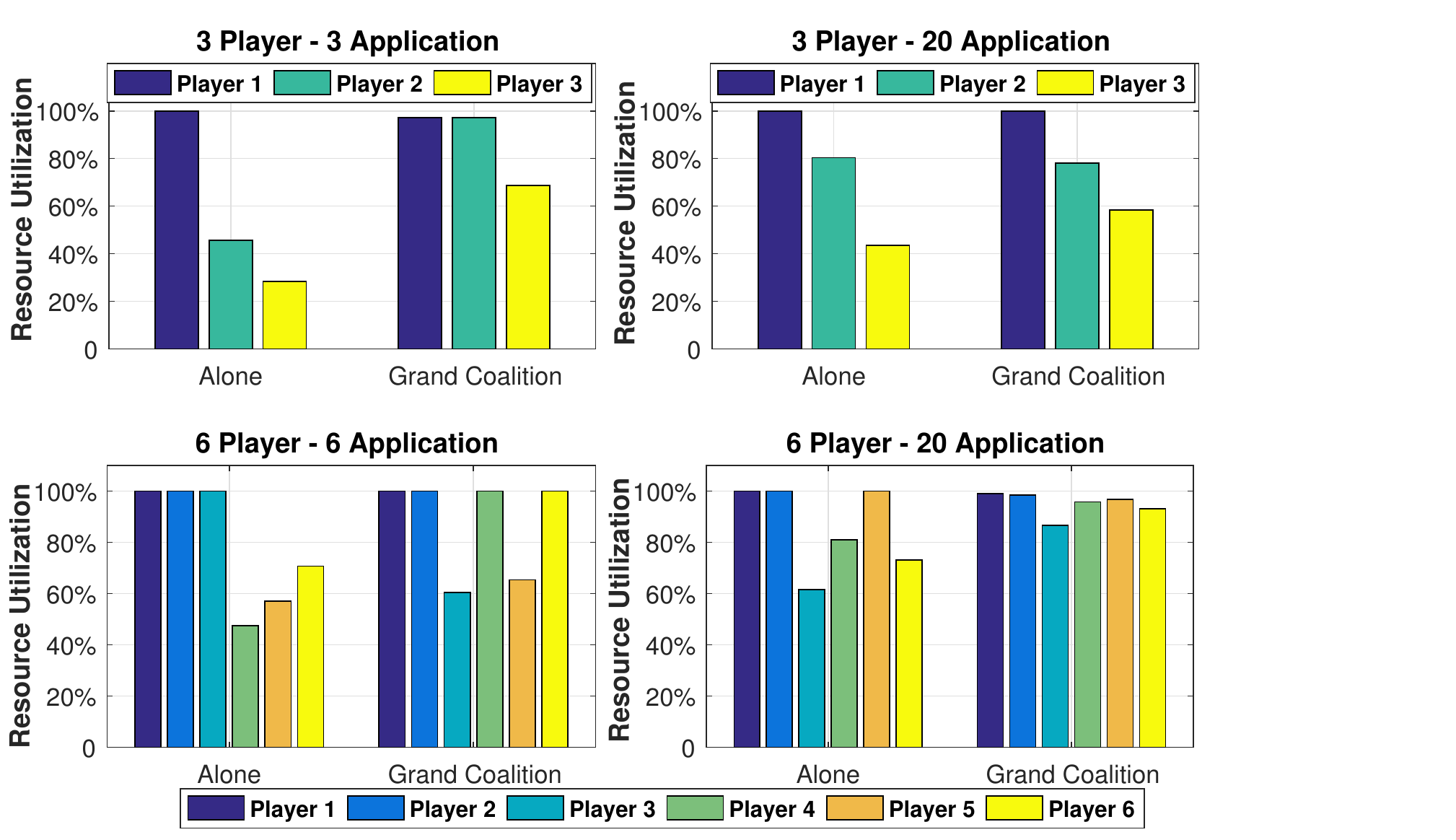}
	\caption{Resource utilization for various player-application settings and $\mu=0.01$.  }
	\protect\label{fig:ut01}
	\vspace{-0.1in}
\end{figure}

\par It is evident from our results that the use of resource sharing and allocation is beneficial for cloud service providers. By allowing different cloud service providers to cooperate, the utility of  service providers increases and resources are utilized in an optimal manner. All the  service providers in the game have the incentive to work together and use their resource capacities in the best possible way. 

\section{Related Work}\label{sec:related}

There have been a number of solutions proposed in literature related to resource availability and resource allocation in cloud computing. Alicherry et al. \cite{alicherry2012network} presented a 2-approximation algorithm for network resource allocation in a distributed cloud environment. 
Ergu et al. \cite{ergu2013analytic} proposed a model for resource allocation in cloud computing environment. 
Tsai et al. \cite{tsai2013optimized} proposed an improved differential evolution algorithm for resource allocation in cloud that combines differential evolution algorithm with Taguchi method. 
Pawar et al. \cite{pawar2013priority} proposed a dynamic resource allocation that considers different SLA parameters and premptable task execution. 
 Wei et al. \cite{wei2010game} proposed a non-cooperative game theory based approach for resource allocation in the cloud. The objective of the algorithm is to maximize fairness among different users. 
Duan et al. \cite{duan2014multi} modeled scheduling in hybrid clouds as a sequential cooperative game. The authors proposed a storage and communication aware algorithm that jointly optimizes the execution time and economic cost of scheduling Bag-of-Tasks work flows. Shi et al. \cite{shi2018shapley} proposed a Shapley value based mechanism for on-demand bandwidth allocation between data centers. They focus on network bandwidth and do not take computing and storage resources into account. Shi et al. \cite{shi2017online} also proposed an online auction mechanism  for dynamically providing virtual clusters in geo-distributed clouds. 
Zafari et al. \cite{zafari2018game} modeled  resource sharing among  mobile edge clouds as a canonical game with transferable utility (TU).
 \par However, our work in this paper differed from \cite{shi2017online,shi2018shapley,wei2010game,alicherry2012network,duan2014multi}. We considered the multi-objective nature of the resource sharing problem and allowed different cloud service providers to share resources and improve their utilities, while satisfying the requests of different users.  
 Our work is closest to \cite{zafari2018game}. However, authors in \cite{zafari2018game} considered the case where each service provider first allocates resources to its own applications and shares the remaining resources with the applications of other service providers. They  only considered the utility of the service provider and did not consider the user satisfaction.   Our framework guarantees Pareto optimality like \cite{zafari2018game}, however, it is more generic than \cite{zafari2018game} as TU games can be modeled as a special case of NTU \cite{roger1991game}. 
 

\section{Conclusions}\label{sec:conclusion}
In this paper, we proposed a cooperative game theory based framework for resource sharing and allocation among cloud service providers. We showed that for a monotonic non-decreasing utility, resource sharing among multiple cloud service providers can be modeled as a canonical game with non-transferable utility. We prove that the game is convex, hence the  core of the game is non-empty.  We proposed an $\mathcal{O}(N)$ algorithm that provides allocation decision from the core. Simulation results showed that our proposed framework improves utility of cloud service providers. Furthermore, request satisfaction of users  also improved.


\bibliographystyle{IEEETran}
\bibliography{refs}  

%

\end{document}